\setlist{itemjoin ={,\enspace},itemjoin* = {, and\enspace}}
\newtheorem{lemma}{Lemma}    
\newtheorem{theorem}{Theorem}
\definecolor{skyblue}{RGB}{48,116,178}
\definecolor{darkred}{RGB}{222,61,62}
\definecolor{darkgreen}{RGB}{67,160,88}
\def\spc					 {SpecEst\xspace}
\def\uspc					{USF-SpecEst\xspace}
\def\adc					{ADC\xspace}
\def\usf					{USF\xspace}
\def\madc					{{$\mathscr{M}_\lambda$--{\fontsize{9pt}{9pt}\selectfont\texttt{ADC}}}\xspace}
\def\res          				{{\varepsilon}_{g}}
\def\taum                     {\tau_{m}}
\newcommand{\ceil}[1]	  {\left\lceil {#1} \right\rceil}
\newcommand{\Tl}[3]        		{\mathscr{T}_{#1}^{#2} \rob{#3}}
\newcommand\CRBL[1]                                 {\texttt{CRB}_{\lambda}\rob{#1}}
\newcommand\CRBW[1]                                {\texttt{CRB}_{\noise}\rob{#1}}
\def\ie					     {\emph{i.e.~}}
\newcommand{\MO}[1]		{\mathscr{M}_\lambda ({#1} )}
\newcommand{\mat}[1]		{\mathbf{#1}}
\newcommand\fig[1]			{Fig.~\ref{#1}}
\def\N					{\mathbb N}
\def\Z					{\mathbb Z}
\def\R					{\mathbb R}
\def\DE					{\stackrel{\rm{def}}{=}}
\newcommand{\id}[1]		{\mathbb{I}_{#1} }
\DeclareSymbolFont{cyrletters}{OT2}{wncyr}{m}{n}
\DeclareMathSymbol{\Sha}{\mathalpha}{cyrletters}{"58}
\newcommand\subfig[3]			{Fig.~\ref{#1} (${\mathsf{#2}}_{#3}$)}
\newcommand{\Lp}[1]{{\mathbf{L}}_{{#1}}}
\newcommand{\abs}[1]              	{\left| #1\right|}
\newcommand{\eqdef}{\stackrel{\rm def}{=}}
\def\ak         									{a_{k,n}}
\def\g 												  {g}
\def\mbg 												{\mat{\overline{g}}}
\newcommand{\EQc}[1]		{\stackrel{\eqref{#1}}{=}}
\def\momega                           {\boldsymbol{\omega}}
\def\crlb                                     {CRBs\xspace}
\def\by				                 	      {\overline{y}}
\def\byw				                 	{\overline{y}_{\noise}}
\def\mbyw				                 {\mat{\overline{y}}_{\noise}}
\def\bg				                 	      {\overline{g}}
\def\bres				                 	{\overline{\varepsilon}_{g}}
\def\bw				                 	     {x}
\def\nm				                 	     {n_m}
\def\ep                                       {\mathbb{E}}
\newcommand\Pro[1]				    {\mathsf{Pr}\rob{#1}}
\newcommand\cdf[2]				    {\mathsf{F}_{#1}\rob{#2}}
\newcommand\pdf[2]				    {\mathsf{p}_{#1}\rob{#2}}
\newcommand\secref[1]			{Section \ref{#1}}
\newcommand\thmref[1]			{Theorem \ref{#1}}
\newcommand\lemmaref[1]			{Lemma \ref{#1}}
\def\parameter       			{\underline{\Theta}}
\def\Log                      {{\mathcal{L}}\rob{\mbyw;\prvec}}
\def\logL                    {\log{\mathcal{L}}\rob{\mbyw;\prvec}}
\def\ZZ 							 {\mat{Z}}
\def\ZO                                  {\mathbf{Z}_1}
\def\ZT                                  {\mathbf{Z}_2}
\def\det               			{\operatorname{det}}
\newcommand{\bpara}[1]		{\smallskip \noindent {\bf #1}}
\newcommand{\BulPara}[1]	{\noindent {$\bullet$ {\bf #1}}}
\def\psnr					{\mathsf{PSNR}}
\def\v                          {\mat{\gbnoise}}
\def\mR                     {\mat{R}}
\newcommand{\transp}{{\top}}
\def\gradient          {{\partial_{\parameter}  {\mbg}}}
\def\ZO 				     {\mat{z}_{1}}
\def\ZT 				     {\mat{z}_{2}}
\def\ZTH 				   {\mat{z}_{3}}
\def\fun     				 {f}
\def\o               		   {\operatorname{o}}
\def\l						{\left(}
\def\r						{\right)}
\def\prvec					{\underline{\theta}}
\def\prvest				   {\hat{\underline{\theta}}}
\def\phik                    {\varphi_{k}}
\def\ak 					   {a_k}
\def\noise                 {w}
\def\gbnoise           {v}
\def\Gbnoise           {V}
\def\bNoise             {{X}}
\def\yw					    {y_{\noise}}
\def\sos 					{\texttt{SoS}_K\l \prvec\r}
\def\msep                 {\EuScript{E}_2{\rob{p_{\gbnoise}, p_{\bNoise}}}}
\def\IMO                    { \mathbf{I} \rob{\prvec}}
\newcommand\rob[1]			{\l #1 \r}
\newcommand{\sqb}[1]		{\left[ #1 \right]}
\newcommand{\ft}[1]			{\left[\kern-0.15em\left[#1\right]\kern-0.15em\right]}
\newcommand{\fe}[1]		{\left[\kern-0.30em\left[#1\right]\kern-0.30em\right]}
\newcommand{\flr}[1]		{\left\lfloor #1 \right\rfloor}
\newcommand*{\rom}[1]{\expandafter\@slowromancap\romannumeral #1@}
\def\moverlay{\mathpalette\mov@rlay}
\def\mov@rlay#1#2{\leavevmode\vtop{
\baselineskip\z@skip \lineskiplimit-\maxdimen
\ialign{\hfil$\m@th#1##$\hfil\cr#2\crcr}}}
\newcommand{\charfusion}[3][\mathord]{
#1{\ifx#1\mathop\vphantom{#2}\fi
\mathpalette\mov@rlay{#2\cr#3}
}
\ifx#1\mathop\expandafter\displaylimits\fi}
\newcommand{\PPbox}[1]{\vspace{4pt}\noindent {{\small \colorbox{black}{\color{white}{\sf #1}}}\hspace{4pt}}}
\begin{document}

\title{
USF Spectral Estimation: \\  Prevalence of Gaussian Cram\'er-Rao Bounds \\ Despite Modulo Folding	
\thanks{This work is supported by the European Research Council's Starting Grant for ``CoSI-Fold'' (101166158) and UK Research and Innovation council's FLF Program ``Sensing Beyond Barriers via Non-Linearities'' (MRC Fellowship award no.~MR/Y003926/1). Further details on {Unlimited Sensing} and materials on \textit{reproducible research} are available via {\texttt{https://bit.ly/USF-Link}}.}} 

\author{ 
Ruiming Guo 
and
Ayush Bhandari \\ 
Dept. of Electrical and Electronic Engg., 
{Imperial College London}, SW7 2AZ, UK. \\ 
\texttt{\{ruiming.guo,a.bhandari\}@imperial.ac.uk}

\vspace{0.5cm}

{\color{blue} \sf To appear in Proc. of 2025 IEEE Statistical Signal Processing Workshop.}
}

\maketitle

\begin{abstract}

Spectral Estimation (SpecEst) is a core area of signal processing with a history spanning two centuries and applications across various fields. With the advent of digital acquisition, SpecEst algorithms have been widely applied to tasks like frequency super-resolution. However, conventional digital acquisition imposes a trade-off: for a fixed bit budget, one can optimize either signal dynamic range or digital resolution (noise floor), but not both simultaneously.
The Unlimited Sensing Framework (USF) overcomes this limitation using modulo non-linearity in analog hardware, enabling a novel approach to SpecEst (USF-SpecEst). However, USF-SpecEst requires new theoretical and algorithmic developments to handle folded samples effectively.
In this paper, we derive the Cramér-Rao Bounds (CRBs) for SpecEst with noisy modulo-folded samples and reveal a surprising result: the CRBs for USF-SpecEst are scaled versions of the Gaussian CRBs for conventional samples. Numerical experiments validate these bounds, providing a benchmark for USF-SpecEst and facilitating its practical deployment.

\end{abstract}

\begin{IEEEkeywords}
Cram\'er-Rao Bounds, modulo non-linearity, Prony's method, spectral estimation, Unlimited Sensing.
\end{IEEEkeywords}

\tableofcontents

\newpage

\spacing{1.5}

\section{Introduction}
\label{sec:intro}

Spectral Estimation or \spc, also referred to as frequency estimation, is a classical and fundamental topic in the signal processing (SP) community, with numerous applications in radar, communications, acoustics, optics, and more. Since the pioneering work of Prony \cite{Prony:1795:J}, \spc has been extensively studied over the past decades \cite{Kay:1981:J,Hua:1990:J,Bhaskar:2013:J,Park:2018:J,Guo:2023:J}.

Despite significant theoretical advances \cite{Stoica:1989:J,Yao:1995:J,Rice:2001:J} and algorithmic developments \cite{Hua:1990:J,Bhaskar:2013:J,Park:2018:J}, practical applications of \spc remain constrained by limitations in digital acquisition. 
Analog-to-digital converters (ADCs) quantize sum-of-sinusoids (SoS) signals under a finite bit budget, posing a trade-off between signal dynamic range and digital resolution. This poses fundamental limitations for two core capabilities in \spc: resolving {\bf closely-spaced frequencies} and distinguishing between {\bf weak and strong targets}. 

The USF has recently emerged as a paradigm shift breaking the trade-off in conventional digitization; the USF enables simultaneous high dynamic range (HDR) and high digital resolution sensing \cite{Bhandari:2017:C,Bhandari:2020:Ja,Bhandari:2021:J,Shtendel:2023:J,Guo:2024:J}. The USF leverages modulo non-linearity in the analog domain, producing computationally encoded, folded samples. The resulting digital representation is of low dynamic range and is achieved through the mapping,
\begin{equation}
\label{eq:map}
\mathscr{M}_{\lambda}
:g \mapsto 2\lambda \left( {\fe{ {\frac{g + \lambda}{{2\lambda }}  } } - \frac{1}{2} } \right), \quad 
\ft{g} \DE g - \flr{g}
\end{equation}
where 
$\flr{g}  = \sup \left\{ {\left. {k \in \Z} \right|k \leqslant g} \right\}$ and $\lambda>0$ is the folding threshold. Hardware realization \cite{Bhandari:2021:J} of the \usf has shown that it overcomes fundamental bottlenecks in conventional digitization, delivering performance breakthroughs in the context of: 

\BulPara{Signal Dynamic Range.} 
\usf eliminates clipping or saturation which is widely reported problem while offering up to 60-fold dynamic range extension \cite{Zhu:2024:Ca} in practical setups. 

\BulPara{Digital Resolution.} 
Modulo folding in USF offers high digital resolution, up to 10 dB improvement in quantization noise in applications such as radar \cite{Feuillen:2023:C} and tomography \cite{Beckmann:2024:J}.

\bpara{Contributions.} The USF-based spectral estimation was first studied in \cite{Bhandari:2018:C}.
Akin to Prony's method, we have recently shown that $6K+4$ modulo samples suffice to recover $K$ sinusoids \cite{Guo:2024:J}. Here, Our goal is to investigate the performance bounds for single channel. 
While CRBs for \spc has been widely covered in literature \cite{Stoica:1989:J}, the context of \uspc is relatively new and unexplored, however, highly relevant given the advantages of the USF. This paper investigates the theoretical limits of \uspc from noisy, non-linearly folded samples. Concretely, we derive Cram\'er-Rao Bounds (CRBs) for \uspc. Our starting point is the justification of the noise model based on on-going hardware experiments \cite{Guo:2024:J} with modulo ADCs or \madc. There on, we derive the \crlb for both single and multiple sinusoidal estimation which is validated via numerical experiments. Somewhat surprisingly, our work shows that despite the presence of modulo non-linearity in the acquisition, \emph{the CRBs for \uspc match the Gaussian CRBs for \spc}.

Our work fundamentally differs from the recent work in \cite{Cheng:2024:J} in two ways: (i) we provide CRBs for \uspc problem, and, (ii) our noise model in \eqref{eq:samples}, different from \cite{Cheng:2024:J},  stems from hardware experiments and validation based on ongoing work on \uspc \cite{Zhu:2024:C,Guo:2024:J}.

\section{Problem Formulation}
\label{sec: introduction}

Let $\g\rob{t} \in\Lp{\infty}$ represent a sum-of-sinusoids (SoS),
$
g\rob{t}  =  \sum\nolimits_{k = 1}^{K}\ak \sin\rob{\omega_{k}t + \phik} \equiv \sos$
{where } $\prvec = \{\ak,  \omega_{k}, \phik\}_{k\in\id{K}}$ denotes the \emph{unknown} amplitudes, phases and frequencies, respectively. The action of folding non-linearity in \eqref{eq:map} maps $\g$ into a folded, continuous-time signal, $y\rob{t} = \MO{\g\rob{t}}$. Thereon, $y\rob{t}$ is pointwise sampled, leading to folded samples
$
y \sqb{n} = {\left.  \MO{g\rob{t}} \right|_{t = nT}}, n\in\id{N}
$
where $T$ is sampling step and $N$ is number of samples ($\id{N} = \{1,\ldots N\}$).
Since an \adc performs analog-domain folding, the noise contribution arising during sampling is attributed to (i) thermal noise following a Gaussian distribution \cite{Zhang:2015:J} and (ii) quantization noise following a uniform distribution. 
As a result, in real-world scenarios, the noisy measurements are modeled as,
\begin{equation}
\label{eq:samples}
\yw \sqb{n} = y \sqb{n} + \noise\sqb{n}, \quad n\in\id{N}
\end{equation}
where $\noise\sqb{n} \sim \mathcal{N}(0,\,\sigma^{2})$ denotes the noise. {We refer the reader to Fig. 6 in \cite{Guo:2024:J} for details on the hardware experiments that justify the noise assumption.} 
Given ${\yw\sqb{n}}_{n\in\id{N}}$, our goal is to derive statistical performance bounds, specifically the \crlb, to benchmark the performance of \usf-based spectral estimation.

\begin{figure*}[!t]	
\begin{align}
\label{eq:RN}
&\sqb{\mR}_{1,1} =  \gamma^{-1} N + \o(N), \;\;
{\abs{\sqb{\mR}_{1,2}}}  \leqslant \tfrac{\rob{2 + \abs{\cot(\omega_1 T)}}a_1 TN + \o(N)}{2} ,\;\;
{\sqb{\mR}_{1,3}} = \o(N) ,\;\;
\sqb{\mR}_{2,2} = \tfrac{ \gamma^{-1} a_1^2 T^2 N^3 }{3}
\notag \\
&  \sqb{\mR}_{2,3} = \tfrac{ \gamma^{-1} a_1^2T N^2 + \o(N^2)}{2}, \;\; 
\sqb{\mR}_{3,3}  = \gamma^{-1} a_1^2 N + \o(N) \quad \mbox{where} \quad \gamma \DE \rob{1 - \cos\rob{\omega_1 T}}^{-1}
\end{align}
\rule{\textwidth}{0.8pt}
\end{figure*}
\setcounter{equation}{5}

\section{Cram\'er-Rao Bounds}
\label{sec:CRB}

\bpara{CRB for Single Sinusoid with Oversampling.}
We begin our analysis with the simple case of a single sinusoid and examine how the sampling interval $T$ influences the theoretical limits of spectral estimation from noisy folded samples. Furthermore, we establish the connection to the conventional \crlb \cite{Stoica:1989:J, Rice:2001:J, Kay:1993:B}.
Our starting point is the modular decomposition property, 
\begin{align}
\label{eq:moddec}
g = \MO{g} + \res, 
\ \ 
\res\rob{t} = \sum\nolimits_{m=1}^{M} c_m u \rob{t-\taum}
\end{align}
where $\res$ is the residue characterized by $c_m \in 2\lambda\Z$ and $u\rob{\cdot}$ and $\{\taum\}_{m\in\id{M}}$ denote the unit step function and folding instants, respectively. 
Next, we demonstrate that oversampling inherently leads to bounded amplitudes $c_m$ of the residue.

\begin{lemma}
\label{lemma:1}
Let $ g(t)= \sos$, $\Delta^L = \Delta^{L-1}\circ\Delta$ denote the $L$-order finite-difference operator ($L\in\Z^{+}$) and $\kappa>0$ be an arbitrary constant. Then, 
\begin{equation}
\label{eq:os}
T \leqslant 
\frac{2}{\Omega }{\sin ^{ - 1}}\left( {\frac{1}{2}\sqrt[L]{{\frac{\kappa }{{{{\left\| {\mathbf{a}} \right\|}_{{\ell _1}}}}}}}} \right)
\Longrightarrow {\left\| \Delta {{\varepsilon _{{\Delta ^{(L-1)}}g}}} \right\|_\infty } \leqslant 2\lambda \ceil{\tfrac{\kappa}{2\lambda}}
\end{equation}
where $\mathbf{a}=[a_1,\ldots,a_K]^{\transp}$, ${\varepsilon}_s  \EQc{eq:moddec} s - \MO{s}$ and $\Omega \DE {\left\| \momega  \right\|}_\infty$.
\end{lemma}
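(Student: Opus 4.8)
The plan is to combine the modular decomposition \eqref{eq:moddec} with the contractive effect of repeated finite differencing on sinusoids. Writing $h \DE \Delta^{L-1}g$ and applying \eqref{eq:moddec} sample-wise, I would first record the integrality property that the residue ${\varepsilon}_{h} = h - \MO{h}$ is valued in $2\lambda\Z$: since $\MO{\cdot}$ maps into $[-\lambda,\lambda)$, the difference $h - \MO{h}$ is necessarily an integer multiple of $2\lambda$. Because $\Delta$ is linear, it acts term-by-term on this decomposition, giving the identity $\Delta{\varepsilon}_{h} = \Delta h - \Delta\MO{h} = \Delta^{L}g - \Delta\MO{h}$, where the left-hand side again lies in $2\lambda\Z$. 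The entire argument then reduces to bounding the two terms on the right separately and exploiting this integrality at the end.

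The main technical step is the estimate on $\Delta^{L}g$. The other term is immediate: $\Delta\MO{h}$ is a difference of two numbers in $[-\lambda,\lambda)$, so $\norm{\Delta\MO{h}}_{\infty} < 2\lambda$ with no work. For $\Delta^{L}g$ I would pass to complex exponentials, using $\Delta e^{j\omega nT} = \rob{e^{j\omega T}-1}e^{j\omega nT}$ together with $\abs{e^{j\omega T}-1} = 2\abs{\sin\rob{\omega T/2}}$; iterating $L$ times shows that $\Delta^{L}$ rescales the $k$-th component $a_k\sin\rob{\omega_k nT + \varphi_k}$ by the magnitude factor $\rob{2\abs{\sin\rob{\omega_k T/2}}}^{L}$ (up to a phase shift), which cleanly sidesteps the messy trigonometric recursion. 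The hypothesis on $T$ confines $\Omega T/2$ to $[0,\pi/2]$ through the range of $\sin^{-1}$, where $\sin$ is monotone, so $\abs{\sin\rob{\omega_k T/2}} \leqslant \sin\rob{\Omega T/2}$ for every $k$. A triangle inequality over $k$ then yields $\norm{\Delta^{L}g}_{\infty} \leqslant \norm{\mathbf{a}}_{\ell_1}\rob{2\sin\rob{\Omega T/2}}^{L}$, and rearranging the assumed bound on $T$ gives $\rob{2\sin\rob{\Omega T/2}}^{L} \leqslant \kappa/\norm{\mathbf{a}}_{\ell_1}$, hence $\norm{\Delta^{L}g}_{\infty} \leqslant \kappa$.

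To conclude, I would combine the two bounds into $\norm{\Delta{\varepsilon}_{h}}_{\infty} \leqslant \norm{\Delta^{L}g}_{\infty} + \norm{\Delta\MO{h}}_{\infty} < \kappa + 2\lambda$, where strictness comes from the $<2\lambda$ bound on $\Delta\MO{h}$. Writing $\Delta{\varepsilon}_{h} = 2\lambda p$ with $p\in\Z$ (by the integrality noted above), the strict bound forces $\abs{p} < \kappa/\rob{2\lambda} + 1$, whence $\abs{p} \leqslant \ceil{\kappa/\rob{2\lambda}}$ by integrality. This gives $\norm{\Delta{\varepsilon}_{h}}_{\infty} \leqslant 2\lambda\ceil{\kappa/\rob{2\lambda}}$, which is exactly \eqref{eq:os} after unwinding $h = \Delta^{L-1}g$.

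I expect the finite-difference amplitude contraction of the second paragraph to be the principal obstacle: the phase bookkeeping must be handled carefully so that the per-frequency factor is genuinely $\rob{2\abs{\sin\rob{\omega_k T/2}}}^{L}$, and one must confirm the regime $\Omega T \leqslant \pi$ in which $\abs{\sin\rob{\omega_k T/2}} \leqslant \sin\rob{\Omega T/2}$ is valid. The only other delicate point is the final rounding, where the \emph{strict} bound on $\Delta\MO{h}$ is precisely what converts the threshold $\kappa + 2\lambda$ into the ceiling $\ceil{\kappa/\rob{2\lambda}}$ rather than one extra unit.
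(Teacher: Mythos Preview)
Your proposal is correct and follows essentially the same route as the paper's proof: both establish $\norm{\Delta^{L}g}_{\infty}\leqslant\rob{2\sin\rob{\Omega T/2}}^{L}\norm{\mathbf{a}}_{\ell_1}\leqslant\kappa$ under the oversampling hypothesis and then invoke the $2\lambda\Z$-valuedness of the residue to pass to the ceiling bound. The only cosmetic differences are that the paper obtains the contraction bound via Young's inequality plus induction (you use complex exponentials directly) and defers the integrality-to-ceiling step to a citation on modular arithmetic, whereas you write it out explicitly via the strict inequality $\norm{\Delta\MO{h}}_{\infty}<2\lambda$.
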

\begin{proof}
By definition, the residue of ${\Delta^L}g$ is given by $  {{\varepsilon _{{\Delta ^L}g}}} =  \rob{ {{\Delta ^L}g} - \MO{{{\Delta ^L}g}}  } $. From modular arithmetic \cite{Bhandari:2020:Ja}, it follows that
$
{\left\| \Delta {{\varepsilon _{{\Delta ^{(L-1)}}g}}} \right\|_\infty } \leqslant 2\lambda \ceil{ {{\left\| {{\Delta ^L}g} \right\|_\infty }}/({2\lambda})}
$.
To establish the result, we first derive a bound on $\Delta^L g$. Consider the first-order difference $(\Delta g)\sqb{n} = g[n+1] - g\sqb{n}$. Using Young’s inequality for convolution, we have
${\left\| {\Delta g} \right\|_\infty } \leqslant  2\sin \left( {{{{\Omega }T}}/{2}} \right){\left\| {\mathbf{a}} \right\|_{{\ell _1}}}$. 
For an arbitrary $\kappa>0$, ${\left\| {\Delta g} \right\|_\infty }$ is upper-bounded by choosing
$
T \leqslant {{  { 2{{\sin }^{ - 1}}\left( {\frac{1}{2}{{\frac{\kappa }{{{{\left\| {\mathbf{a}} \right\|}_{{\ell _1}}}}}}}} \right)} }}/{{{\Omega }}}
$.
By induction, for $\Delta^L$, we obtain
\begin{equation}
\label{eq:bound}
{\left\| {{\Delta ^L}g} \right\|_\infty } \leqslant {2^L}{\sin ^L}\left( {{\Omega }T/2} \right){\left\| {\mathbf{a}} \right\|_{{\ell _1}}}.
\end{equation}
Thus, ${\left\| {{\Delta ^L}g} \right\|_\infty }$ can be bounded arbitrarily by choosing,
$
\notag
T \leqslant \Tl{\lambda}{\Omega}{\kappa} =
\rob{\tfrac{2}{\Omega}}{\sin ^{ - 1}}\left( {\tfrac{1}{2}\sqrt[L]{{\tfrac{\kappa }{{{{\left\| {\mathbf{a}} \right\|}_{{\ell _1}}}}}}}} \right)
\Longrightarrow {\left\| {{\Delta ^L}g} \right\|_\infty } \leqslant \kappa.
$
Substituting this bound into the inequality ${\left\| \Delta {{\varepsilon _{{\Delta ^{(L-1)}}g}}} \right\|_\infty } \leqslant 2\lambda \ceil{ {{\left\| {{\Delta ^L}g} \right\|_\infty }}/({2\lambda})}$ leads to the desired result. 
\end{proof}

\textcolor{black}{
\lemmaref{lemma:1} states that sampling at three times the Nyquist rate, \ie $T\leqslant \frac{\pi}{3\Omega}$, guarantees an exponential rate of dynamic range shrinkage. This constant-factor oversampling mitigates the resolution loss of the discrete frequencies $\{\omega_{k}T\}_{k=1}^{K}$ caused by oversampling. Moreover, choosing $\kappa\leqslant \lambda, L=1$ results in $M <N$ since 1) the discrete frequencies are upper bounded by $\frac{\pi}{3}$ and 2) ${\left\| {\Delta g} \right\|_\infty } \leqslant \lambda$ from \eqref{eq:bound}. }
In this paper, we focus on the case $L=1$. This choice is motivated by the fact that, in noisy scenarios, applying $\Delta^L$ with $L > 1$ amplifies noise, thereby degrading the accuracy of spectral estimation. The oversampling setup enables an asymptotic analysis of the \crlb in single sinusoid scenario, yielding explicit expressions for the \crlb as formalized below.
\begin{theorem}
\label{thm:2}
Let $g(t) = a_1 \sin\rob{\omega_1 t + \varphi_{1}}$ and the noisy folded samples be $\yw[n] = \MO{g(nT)} + \noise[n], n\in\id{N}$ with $\lambda = \abs{a_1} - \epsilon, \epsilon>0,
T \leqslant \left( {2/\abs{\omega _1}} \right){\sin ^{ - 1}}\left( {\lambda /\rob{2\abs{a_1}}} \right),
\{\omega_1 T, \varphi_{1}\} \in \mathbb{Q}^{2}$,
and $\noise[n]\sim \mathcal{N}(0,\,\sigma^{2})$. Then, 
\begin{align}
\label{eq:CRB}
&\mathop {\lim }\limits_{N \to \infty }  {\begin{bmatrix*}[l]
\CRBL{a_1} \\ 
\CRBL{\omega_1 T} \\ 
\CRBL{\varphi_{1}} 
\end{bmatrix*}}  
= {\frac{2 \gamma}{ N  \psnr}}
{\begin{bmatrix*}[c]
{a_1^2} \\ 
\rob{\sqrt{12}/N}^2 \\ 
\rob{2}^2
\end{bmatrix*}} \\
&\mbox{where} \quad \gamma = \rob{1 - \cos\rob{\omega_1 T}}^{-1}, \; \psnr = {a_1^2}/{\sigma^2}. \nonumber
\end{align}
Furthermore, $\lim\nolimits_{N\to\infty}\CRBL{\prvec} =  {\gamma}\lim\nolimits_{N\to\infty}\CRBW{\prvec}$
where $\CRBW{\cdot}$ denotes the conventional \crlb \cite{Stoica:1989:J}. 
\end{theorem}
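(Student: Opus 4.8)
The plan is to prove the proportionality at the level of the Fisher information matrices (FIMs) and then transfer it to the bounds by a single matrix inversion, which is cleaner and more robust than comparing the two closed-form inverses entrywise. Write $\mathbf{I}_\lambda$ for the FIM of the folded model, whose leading-order entries are collected (up to the differenced-noise variance $2\sigma^2$) in $\mR$ of \eqref{eq:RN}, i.e.\ $\mathbf{I}_\lambda = \tfrac{1}{2\sigma^2}\mR$, and $\mathbf{I}_{\noise}$ for the conventional unfolded FIM of $g[n]=a_1\sin(\omega_1 nT+\varphi_1)+\noise[n]$ from \cite{Stoica:1989:J}. The whole claim reduces to the single asymptotic matrix identity $\mathbf{I}_\lambda = \gamma^{-1}\mathbf{I}_{\noise}$, since then $\CRBL{\prvec}=\mathbf{I}_\lambda^{-1}=\gamma\,\mathbf{I}_{\noise}^{-1}=\gamma\,\CRBW{\prvec}$ componentwise as $N\to\infty$, which is exactly the final statement.

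First I would pin down the origin of the scalar $\gamma^{-1}$. Because the modulo map is linearized only after a first difference (the $L=1$ regime of \lemmaref{lemma:1}, where $\kappa\le\lambda$ renders the residue jumps $\Delta\varepsilon_g\in 2\lambda\Z$ exactly removable), the signal that drives the folded likelihood is $\Delta g[n]=2a_1\sin(\omega_1 T/2)\cos(\omega_1 nT+\varphi_1+\omega_1 T/2)$, a sinusoid of effective amplitude $2a_1\sin(\omega_1 T/2)$, while the differenced noise carries variance $2\sigma^2$. The effective power-to-noise ratio therefore scales by $\tfrac{(2\sin(\omega_1 T/2))^2}{2}=\tfrac{2(1-\cos(\omega_1 T))}{2}=\gamma^{-1}$ relative to the conventional model, which is precisely the factor standing in front of every leading-order entry of $\mR$.

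Next I would verify this uniform scaling entrywise. Using $\{\omega_1 T,\varphi_1\}\in\mathbb{Q}^2$ to evaluate the trigonometric sums $\sum_n\sin^2$, $\sum_n\cos^2$, $\sum_n n\cos^2$, $\sum_n n^2\cos^2$ (with averages $N/2$, $N/2$, $N^2/4$, $N^3/6$), I would check that each leading-order entry of $\mathbf{I}_\lambda$ equals $\gamma^{-1}$ times the corresponding entry of $\mathbf{I}_{\noise}$, and that the amplitude cross-terms $[\mR]_{1,2}$ and $[\mR]_{1,3}$ are of strictly lower order in $N$ in both models. Inverting the decoupled amplitude entry together with the $2\times2$ frequency--phase block of $\mR$ and letting $N\to\infty$ then reproduces $\tfrac{2\gamma}{N\psnr}[\,a_1^2,\ 12/N^2,\ 4\,]^{\transp}$ in \eqref{eq:CRB}, while the same inversion with $\gamma=1$ gives $\CRBW{\prvec}$, confirming the ratio is exactly $\gamma$.

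The main obstacle is this second step: establishing that the proportionality holds through one common scalar across the entire matrix, not merely that the three diagonal bounds happen to carry a factor $\gamma$. This requires controlling the off-diagonal amplitude--frequency and amplitude--phase couplings, showing they vanish at leading order in both FIMs so that inversion commutes with the scalar, and pinning down the differenced-noise normalization $2\sigma^2$ that supplies the compensating factor of two. The rational-frequency hypothesis is what makes the residual trigonometric sums exactly of order $\o(\cdot)$ rather than only negligible on average, and hence what makes the scalar relation $\mathbf{I}_\lambda=\gamma^{-1}\mathbf{I}_{\noise}$ clean enough to invert.
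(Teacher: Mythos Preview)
Your matrix-level strategy—showing $\mathbf{I}_\lambda=\gamma^{-1}\mathbf{I}_{\noise}$ and inverting—is sound, and your identification of $\gamma^{-1}$ as the effective SNR ratio coming from the differenced amplitude $2a_1\sin(\omega_1 T/2)$ against the differenced-noise variance $2\sigma^2$ is correct. But there is a genuine gap: you have not explained why the differenced likelihood is Gaussian at all. After differencing, the measurements are $\overline{y}_w[n]=\bar g[n]-\bar\varepsilon_g[n]+\bar w[n]$, and $\bar\varepsilon_g[n]=\sum_m c_m\delta[n-n_m]$ is a train of $\pm 2\lambda$ impulses at the fold locations. \lemmaref{lemma:1} (with $\kappa\le\lambda$, $L=1$) only pins $c_m\in\{-2\lambda,2\lambda\}$; it does \emph{not} make these jumps ``exactly removable'' from the data, so the signal driving the likelihood is not simply $\bar g$.

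The paper fills this gap by a stochastic argument that your outline omits entirely: it models $\bar\varepsilon_g$ as Bernoulli impulsive noise with $p=q=M/(2N)$, shows the resulting noise PDF is a three-component Gaussian mixture, bounds the $L^2$ gap to the single-Gaussian $\mathcal{N}(0,2\sigma^2)$ by a quantity proportional to $p^2$, and then argues $M/N\to 0$ so the approximation becomes exact as $N\to\infty$. This last step is where the hypothesis $\{\omega_1 T,\varphi_1\}\in\mathbb{Q}^2$ is actually used—together with $\lambda=|a_1|-\epsilon$, it forces the number of folds $M$ to satisfy $M=o(N)$. Your proposal assigns $\mathbb{Q}^2$ the wrong role: the trigonometric averages you list are $o(\cdot)$-controlled for any frequency away from $0$ by the closed-form geometric sum (eq.~\eqref{eq:lemma2} in the paper), and do not need rationality. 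Once the Gaussian-likelihood justification is in place, your Step~4 (entrywise FIM comparison and inversion) is essentially the paper's Step~4, so the remaining computation would go through.
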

\begin{proof}
To prove \eqref{eq:CRB}, we begin by modeling the residue as random impulsive noise governed by a Bernoulli distribution. The resulting hybrid Gaussian-Bernoulli noise is approximated by a single Gaussian distribution, and we show that the approximation error is bounded and vanishes as $N \to \infty$. Based on this noise characterization, we derive the \crlb and establish its correspondence with the conventional \crlb.

\PPbox{1}{\bf Statistical Modelling of Residue.} From the modulo decomposition property in \eqref{eq:moddec}, we obtain that $y[n] = g[n] - \res [n], n\in\id{N}$. Denote by $\by = y[n+1] - y[n]$, then
\begin{equation}
\by[n] = \bg[n] - \bres[n], \;\;  \bres[n] = \sum\nolimits_{m=1}^{M} c_m \delta \sqb{n-\nm}
\end{equation}
where $\nm = \frac{\taum}{T} \in \id{N}$ and $\delta\sqb{\cdot}$ is the discrete delta function. 
In view of \lemmaref{lemma:1}, $T \leqslant \left( {2/\abs{\omega _1}} \right){\sin ^{ - 1}}\left( {\lambda /\rob{2\abs{a_1}}} \right) \Longrightarrow c_{m} \in \{-2\lambda,2\lambda\}$.
Unlike the deterministic strategies used in residue recovery \cite{Bhandari:2021:J} and followed-up approaches,\cite{Guo:2023:C,Rudresh:2018:C,Azar:2022:C}, this paper models the residue $\bres\sqb{n}, n \in \id{N-1}$ as random impulsive noise following Bernoulli distribution, 
\begin{equation}
\label{eq:Bernoulli}
\Pro{\bres[n]=z} = p\delta[z - 2\lambda] + q \delta [z+2\lambda] + 1-(p+q) \delta [z]
\end{equation}
and $p+q= \frac{M}{N}$.
We assume the equal probability of positive and negative folds, resulting in $p=q=M/(2N)$. 

\PPbox{2}{\bf Noise PDF Approximation.}
Given this stochastic characterization, the noisy folded measurements $\byw$ read as
$\byw\sqb{n} = \bg\sqb{n} + \gbnoise\sqb{n}$, {where} $\gbnoise\sqb{n} = \bres\sqb{n} + \bw\sqb{n}$.
$\bw\sqb{n} = \noise\sqb{n+1} -\noise\sqb{n}$ satisfies that $\bw\sqb{n} \sim \mathcal{N}(0,\,2\sigma^{2})$ since $\{\noise\sqb{n}\}_{n\in\id{N}} \sim \mbox{IID}\; \mathcal{N}(0,\,\sigma^{2}) $ ({IID refers to independent and identically distribution}). 

We then analyze the PDF of $\gbnoise\sqb{n}, n\in\id{N-1}$. Denote by $\cdf{\Gbnoise}{\gbnoise}$ and $\pdf{\Gbnoise}{\gbnoise}$ the cumulative distribution function (CDF) and probability density function (PDF) of the random variable $\Gbnoise$, respectively. Then, by definition, we have
\begin{equation*}
\cdf{\Gbnoise}{\gbnoise} 
= \Pro{\Gbnoise \leqslant \gbnoise} 
= \ep\sqb{ \Pro{\bres \leqslant v - \bw} } = \rob{p_{\bw}*\mathsf{F}_{\bres}}\rob{\gbnoise}
\end{equation*}  
where $*$ denotes convolution and $\cdf{\bres}{\bres}$ is given by
\begin{equation*}
\cdf{\bres}{\bres} \EQc{eq:Bernoulli} q u \rob{\bres + 2\lambda } + (1 - p - q) u\rob{\bres} + p u\rob{\bres - 2\lambda}.
\end{equation*}
As a result, the CDF of $\gbnoise$ can be computed as
\begin{equation*}
\cdf{\Gbnoise}{\gbnoise}  = q\cdf{\bNoise}{\gbnoise + 2\lambda} + (1 - p - q)\cdf{\bNoise}{\gbnoise }  + p\cdf{\bNoise}{\gbnoise - 2\lambda} 
\end{equation*}
and hence, its PDF is given by
$
\pdf{\Gbnoise}{\gbnoise} = \frac{d \cdf{\Gbnoise}{\gbnoise}}{d\gbnoise} = q\pdf{\bNoise}{\gbnoise + 2\lambda} 
+ (1 - p - q)\pdf{\bNoise}{\gbnoise }  + p\pdf{\bNoise}{\gbnoise - 2\lambda} 
$
which is a mixture of Gaussians. Here, we approximate $\pdf{\Gbnoise}{\gbnoise}$ using $\pdf{\bNoise}{\gbnoise }$, with the approximation error upper bounded by
\begin{align}
\label{eq:MSE}
\msep &\eqdef \int_{\R} \abs{\pdf{\Gbnoise}{\gbnoise} - \pdf{\bNoise}{\gbnoise }}^{2} d\gbnoise \notag \\
&= \tfrac{p^{2}}{2\sqrt{2\pi}\sigma}\rob{  6 + 2e^{\tfrac{-4\lambda^{2}}{2\sigma^{2}}} - 8e^{\tfrac{-\lambda^{2}}{2\sigma^{2}}}}.
\end{align}
Note that $\msep$ is inversely proportional to $\sigma$ and $p^{2}$. 
This indicates that the approximation is accurate in scenarios with heavy noise or sparse foldings.

\PPbox{3}{\bf Asymptotic Analysis of Hybrid Noise.}
Given the hypotheses $\{\omega_1 T, \varphi_{1}\} \in \mathbb{Q}^{2}$, we can obtain that $\forall \{n,l\}\in\Z^{2}$, the equation $\omega_1 T n + \varphi_{1} = {\pi}/{2} + l\pi$ at most has one solution. This implies, given $N\in\Z^{+}, n\in\id{N}$, $\exists \epsilon > 0, \lambda = \abs{a_1} - \epsilon$, such that the inequality $\abs{g\rob{nT}}>\lambda$ at most has one solution. 

Combining these conditions, it follows that ${N \to \infty} \Longrightarrow M/N \to 0$. From \eqref{eq:Bernoulli}, we have $p = q = M/(2N)$, leading to the conclusion that ${N \to \infty} \Longrightarrow \msep \to 0$. This result indicates that the noise effect ($\gbnoise$) on spectral estimation from $\byw$ asymptotically reduces to that of $\bw$ alone, as if modulo folding does not influence the asymptotic behavior of the performance bounds. Therefore, in the following analysis, we approximate the PDF of $\gbnoise$ as $\pdf{\Gbnoise}{\gbnoise} = \pdf{\bNoise}{\gbnoise}$.

\PPbox{4}{\bf \crlb Derivations.} For an unbiased estimator $\prvest$ of $\prvec$
\begin{equation}
\ep \sqb{ ({ \prvest - \prvec }) ({ \prvest - \prvec })^{\transp}}
\succcurlyeq\left( \IMO \right)^{-1}
\end{equation}
where $\IMO$ is the \emph{Fisher Information Matrix} (FIM)
\begin{equation}
\label{eq:IM}
\IMO \eqdef \ep \left[{\partial_{\prvec} \logL} \rob{\partial_{\prvec} \logL}^{\transp} \right]
\end{equation}
and $\Log$ is the likelihood function of noisy samples ($\byw$). Due to the IID assumption, $\logL $ simplifies to,
\[
\logL 
=-(N-1)\log(2\sqrt{\pi} \sigma) - \tfrac{1}{4\sigma^{2}} \sum\nolimits_{n=1}^{N-1} \gbnoise^{2}[n].
\]
To compute $\IMO$, we need to evaluate ${\partial_{\prvec} \logL}$, 
$${\partial_{\prvec} \logL}= \tfrac{-1}{2\sigma^{2}} \partial_{\prvec}  (\mbyw - \mbg)^{\transp} 
\v= \tfrac{1}{2\sigma^{2}}  ({\partial_{\prvec}  {\mbg})^{\transp}} \v$$ 
where $\mbyw,\mbg $ and $\v$ denotes the vector form of $\byw[n], \bg[n]$ and $\gbnoise[n]$, respectively. Hence, $\IMO$ in \eqref{eq:IM} simplifies to,
\begin{equation}
\notag
\IMO =\tfrac{1}{4\sigma^{4}}\mathbb{E}\left[ {(\partial_{\prvec} \mbg)^{\transp}} 
{\v \v^{\transp}}  
{\partial_{\prvec} \mbg}   \right]  
= \tfrac{1}{2\sigma^{2}} 
\rob{{\partial_{\prvec} \mbg}}^{\transp}   {\partial_{\prvec} \mbg}.
\end{equation}
Given $\bg\sqb{n}  = g\rob{nT+T} - g\rob{nT} $, we obtain 
\begin{align*}
&{{\partial _{a_1} } \bg[n]}  = \sqb{ \ZO}_{n}= \mathsf{Im}(  e^{\jmath (\omega_1 T(n+1) + \varphi_{1})} - e^{\jmath (\omega_1 Tn + \varphi_{1})}) \\
&{{\partial _{\omega_1} } \bg[n]} =\sqb{ \ZT}_{n} = a_1 \mathsf{Re}(  e^{\jmath (\omega_1 T(n+1) + \varphi_{1}) } (nT+T) ) \\
&\qquad \qquad \qquad\qquad  - a_1 \mathsf{Re}(  e^{\jmath (\omega_1 Tn + \varphi_{1}) } nT )\\
&{{\partial _{\tau_1} } \bg[n]} = \sqb{ \ZTH}_{n} =  a_1 \mathsf{Re} ( e^{\jmath (\omega_1 T(n+1) + \varphi_{1}) } - e^{\jmath (\omega_1 Tn + \varphi_{1}) }   ).
\end{align*}
The Jacobian matrix $\partial_{\prvec} \mbg$ can be written in matrix form as
\begin{equation}
\label{eq:singradient}
\notag
\gradient 
= \ZZ  \quad \mbox{and} \quad
\ZZ = \begin{bmatrix}
\ZO\;\; \ZT\;\; \ZTH
\end{bmatrix} \in \R^{N\times 3}.
\end{equation}
Hence, 
$
\IMO = \frac{1}{2\sigma^{2}} \ZZ^{\transp} \ZZ
= \frac{1}{2\sigma^{2}} \mR
$
where $\sqb{\mR}_{i,j} = \mat{z}_{i}^{\transp} \mat{z}_{j}$. Next, we will characterize $\sqb{\mR}_{i,j}$ to elucidate the asymptotic bounds in explicit form. This requires the following computations for \emph{summations of trigonometric polynomials}. More specifically, 
with $\sum\nolimits_{n=1}^{N} e^{\jmath (n\omega+\varphi)} = \fun (\omega)$, we show that
\begin{equation}
\label{eq:lemma2}
\begin{bmatrix*}[c]
\sum\nolimits_{n=1}^{N} n^{L}\cos(n\omega+\varphi) \\ 
\sum\nolimits_{n=1}^{N} n^{L}\sin(n\omega+\varphi)
\end{bmatrix*} 
=  \begin{bmatrix*}[c]
\mathsf{Re} \rob{ (-\jmath)^{L} \partial_{\omega}^{L} \fun \rob{\omega}} \\
\mathsf{Im} \rob{ (-\jmath)^{L} \partial_{\omega}^{L} \fun \rob{\omega} } 
\end{bmatrix*}  
\end{equation}
where $\fun (\omega) \DE \tfrac{e^{\jmath \rob{ \omega+\varphi }} \rob{ 1 - e^{\jmath N\omega}}}{1 - e^{\jmath \omega}}$, $L\in \N$, $\{\omega,\varphi\} \in \R^{2}$.

We prove \eqref{eq:lemma2} by induction: for $L=0$, it boils down to geometric summation.  For $L>1$, taking the $L$-th order derivative of $\fun (\omega)$ leads to, 
$
\sum\nolimits_{n=1}^{N} (\jmath n)^{L} e^{\jmath (n\omega+\varphi)} = \partial_{\omega}^{L} \fun \rob{\omega}
$. From the Euler's formula, we obtain the desired result in \eqref{eq:lemma2} by separating the real and imaginary part of $\partial_{\omega}^{L} \fun \rob{\omega}$. 
Hence, substituting \eqref{eq:lemma2} into $\sqb{\mR}_{i,j}$ leads to the closed-form formula for $\sqb{\mR}_{i,j}$ in \eqref{eq:RN}, organized in terms of the order of $N$. $\o(\cdot)$ in \eqref{eq:RN} is the notion of order at infinity defined as, $b_n = \o(a_n) \Longleftrightarrow \lim\nolimits_{n\rightarrow \infty} {\abs{{a_n}/{b_n}}} = \infty$. With the result in \eqref{eq:RN}, the \crlb are computed by evaluating $(\IMO)^{-1} = {2\sigma^{2}} \mR^{-1}$, 
\begin{align}
\label{eq:Bound}
&{ {\begin{bmatrix*}[c]
\CRBL{a_1} \\			
\CRBL{\omega_1} \\ 
\CRBL{\tau_1}
\end{bmatrix*}} }{ \EQc{eq:RN} \tfrac{2\sigma^{2}}{\det(\mR)}  {\begin{bmatrix*}[l]
\gamma^{-2} a_1^4 T^2 N^4/12 + \o(N^4) \\
\gamma^{-2} a_1^2 N^{2} + \o(N^2)\\
\gamma^{-2} a_1^2 T^2 N^4/3 + \o(N^3)
\end{bmatrix*}} }. 
\end{align}
From \eqref{eq:RN}, we obtain $\det(\mR) = \gamma^{-3} a_1^4 T^2 N^5/12 + \o(N^5)$. By substituting $\det(\mR)$ into \eqref{eq:Bound}, we get the result in \eqref{eq:CRB}. 
\end{proof}
\bpara{Remarks.} The key insights from \thmref{thm:2} are as twofold: (i) The \crlb for sinusoidal parameter estimation from $\byw$ asymptotically converges to the conventional \crlb \cite{Stoica:1989:J}, scaled by a factor $\gamma$. (ii) The result holds \textcolor{black}{whenever $M = o(N)$. In this context, the sampling condition $\{\omega_1 T, \varphi_{1}\} \in \mathbb{Q}^{2}$ is just one possible scenario and is not a necessary condition.   }
\begin{figure}[!t]
\centering
\includegraphics[width=0.8\linewidth]{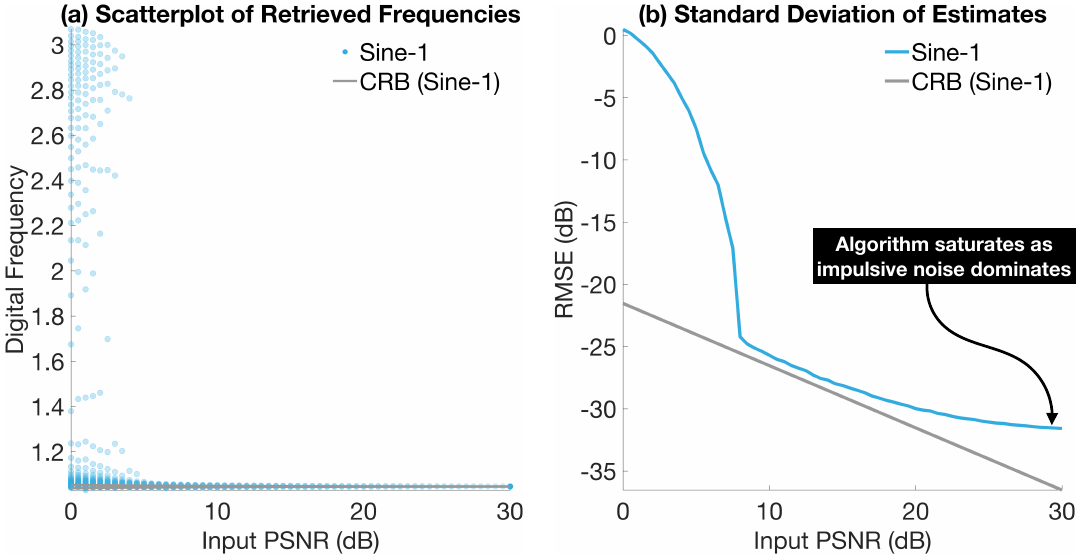}
\caption{
CRB tests for single sinusoid (averaged over $10000$ random realizations). 
We consider $K=1$ sinusoid with parameters: $\omega_1 T = 1.05$, $a_1 = 1$, $\epsilon = 9.83 \times 10^{-6}$, and $N = 100$ samples, resulting in $M = 6$ spikes.
(a) Scatterplot of the retrieved frequencies.
(b) Performance evaluation compared to the \crlb in \thmref{thm:2}.}
\label{fig:1}
\end{figure}

\bpara{CRB for Multiple Sinusoids with Oversampling.}
The analysis above reveals that the hybrid noise $\gbnoise$ asymptotically reduces to $\bw$ alone when $M = o(N)$. This result is independent of $g$, making it applicable to general multiple sinusoids scenarios. Consequently, the \crlb for multiple sinusoids from noisy folded samples $\byw$ converges to the conventional \crlb \cite{Kay:1981:J,Stoica:1989:J,Porat:1986:J}, scaled by a factor $\gamma_k = \rob{1 - \cos\rob{\omega_k T}}^{-1}$ for each frequency $\omega_k$. Notably, 
higher frequency moduli ($\abs{\omega_k}$) are expected to yield better estimation in the context of \usf.

Moreover, the convergence of $\gbnoise$ to $\bw$ suggests the feasibility of direct spectral estimation from $\byw$ in the modulo domain. This implies that existing high-resolution spectral estimation techniques can be adapted for this setting \cite{Kay:1981:J,Hua:1990:J,Guo:2022:J,Li:2021:J,Guo:2023:J}. Using a spectral estimation method, we empirically validate the result in \thmref{thm:2} through numerical experiments.

\section{Experiments}
\label{sec:exp}

Here we validate the \crlb derived in \thmref{thm:2}. Using the matrix pencil method \cite{Hua:1990:J}, we examine the gap between the theoretical limits (\thmref{thm:2}) and practical performance (algorithmic results) across various experimental settings. These include different noise levels, single and multiple sinusoids, and varying folding counts. Through numerical experiments, we demonstrate the validity of the derived \crlb and the corresponding analysis presented in \secref{sec:CRB}.

\begin{figure}[!t]
\centering
\includegraphics[width=0.8\linewidth]{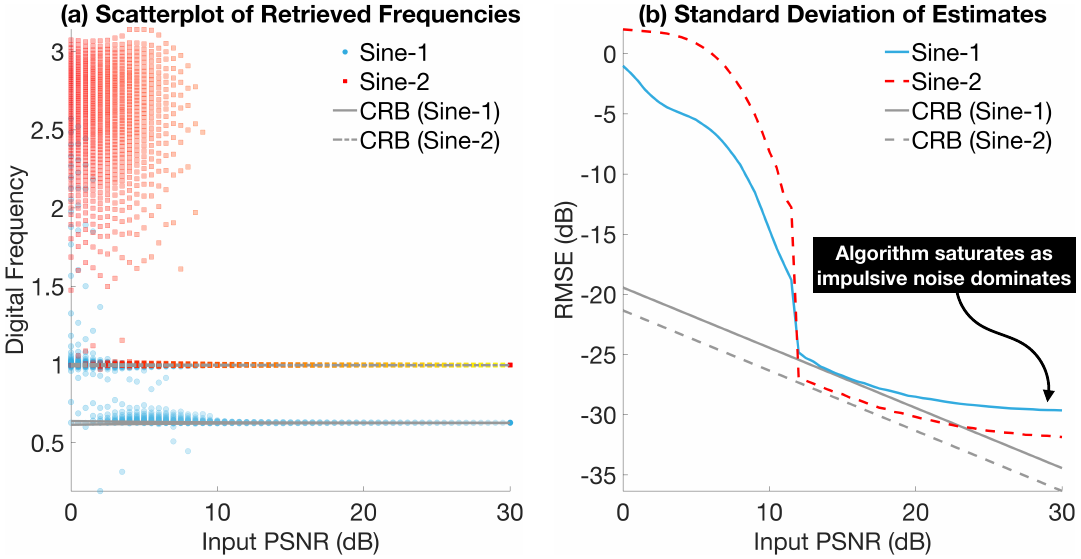}
\caption{
CRB tests for multiple sinusoids (averaged over $10000$ random realizations). 
We consider $K=2$ sinusoids with the following parameters: $\{\omega_1 T, \omega_2 T\} = \{0.63, 1.00\}$, $\{a_1, a_2\} =\{1, 1\}$, $\epsilon = 2.51 \times 10^{-4}$, and $N = 100$ samples, resulting in $M = 2$ spikes.
(a) Scatterplot of the retrieved frequencies.
(b) Performance evaluation. }
\label{fig:2}
\end{figure}

\bpara{CRB Tests for Single Sinusoid Case.} 
In the first experiment, we consider a single sinusoid with $\omega_1 T = 1.05$ and amplitude $a_1 = 1$. The threshold is set as $\lambda = \abs{a_1} - \epsilon$, with $\epsilon = 9.83 \times 10^{-6}$, resulting in $M = 6$ spikes from $N = 100$ samples.

The spectral estimation results obtained using the matrix pencil method \cite{Hua:1990:J} are presented in \fig{fig:1}. As shown in \subfig{fig:1}{b}{}, the algorithm's performance can be classified into three regions based on the PSNR value:

\PPbox{1}$\sqb{0,6}$ dB: 
In this region, the algorithm exhibits larger estimation variances due to significant noise.

\PPbox{2}$\sqb{6,16}$ dB: Beyond 6 dB, the algorithm's performance stabilizes and closely approaches the derived \crlb.

\PPbox{3}$\sqb{16,30}$ dB: 
In this low noise regime, $\msep$ becomes significant, meaning the impulsive noise $\bres$ dominates the noise effect ($\gbnoise$). Since $\bres$ is independent of PSNR, the algorithm's performance saturates and begins to deviate asymptotically from the derived \crlb.

\bpara{CRB Tests for Multiple Sinusoid Case.} 
We further evaluate the derived \crlb in the scenario of multiple sinusoids. In this experiment, the input consists of $K = 2$ sinusoids with $\{\omega_1 T, \omega_2 T\} = \{0.63, 1.00\}$ and amplitudes $\{a_1, a_2\} = \{1, 1\}$. The threshold is set as $\epsilon = 2.51 \times 10^{-4}$, resulting in $M = 2$ spikes from $N = 100$ samples.

As observed in the single sinusoid case, the algorithm's performance curve approaches the \crlb for $\psnr \in [11, 17]$ dB. Beyond this PSNR range, the algorithm saturates and reaches a performance ceiling due to an almost constant noise floor.
These experiments effectively validate the derived \crlb in \thmref{thm:2} and the analysis in \eqref{eq:MSE}, highlighting the potential performance ceiling for challenging scenarios such as resolving \textbf{closely-located} frequencies and separating \textbf{weak and strong} sinusoids.

\section{Conclusion}
\label{sec:conclusion}

Spectral estimation is a fundamental problem in signal processing, with the theoretical limits on parameter estimation in the presence of noise characterized by the Cram\'er-Rao bounds (\crlb). In this work, we studied spectral estimation in the context of the Unlimited Sensing Framework (\usf), where the signal is modulo-folded prior to sampling.
We derived the \crlb for spectral estimation from noisy modulo-folded measurements and established its relationship to the conventional \crlb. Numerical experiments were conducted to validate the derived bounds. These results provide a theoretical benchmark for spectral estimation in the \usf setting, offering insights into the performance limits of parameter estimation under non-linear sensing models.

\bibliographystyle{IEEEtran}

\end{document}